\newcommand{\probP}{\mathbb P}
\newcommand{\exP}{\mathbb E}
\theoremstyle{plain}
\newtheorem{theorem}{Theorem}[section]
\newtheorem{lemma}[theorem]{Lemma}
\newtheorem{corollary}[theorem]{Corollary}
\theoremstyle{definition}
\newtheorem{observation}[theorem]{Observation}
\theoremstyle{remark}
\begin{document}

\title{Smoothed Analysis of the $k$-Swap Neighborhood \\ for Makespan Scheduling\footnote{This research is supported by NWO grant OCENW.KLEIN.176.}}
%\author{Lars Rohwedder, Ashkan Safari and Tjark Vredeveld \\
%{\large Maastricht University}}

\author[1]{Lars Rohwedder}
\author[2]{Ashkan Safari}
\author[2]{Tjark Vredeveld}
\affil[1]{Department of Mathematics and Computer Science, University of Southern Denmark, Denmark}
\affil[2]{Department of Quantitative Economics, School of Business and Economics, Maastricht University, The Netherlands}

\date{November, 2024}

\maketitle

\begin{abstract}

Local search is a widely used technique for tackling challenging optimization problems, offering simplicity and strong empirical performance across various problem domains. In this paper, we address the problem of scheduling a set of jobs on identical parallel machines with the objective of \emph{makespan minimization},
by considering a local search neighborhood, called \emph{k-swap}. A $k$-swap neighbor is obtained by interchanging the machine allocations of at most $k$ jobs scheduled on two machines.
While local search algorithms often perform well in practice, they can exhibit poor worst-case performance. In our previous study, we showed that for $k \geq 3$, there exists an instance where the number of iterations required to converge to a local optimum is exponential in the number of jobs. Motivated by this discrepancy between theoretical worst-case bound and practical performance, we apply \emph{smoothed analysis} to the $k$-swap local search. 
Smoothed analysis has emerged as a powerful framework for analyzing the behavior of algorithms, aiming to bridge the gap between poor worst-case and good empirical performance.
In this paper, we show that the smoothed number of iterations required to find a local optimum with respect to the $k$-swap neighborhood is bounded by $O(m^2 \cdot n^{2k+2} \cdot \log m \cdot \phi)$, where $n$ and $m$ are the numbers of jobs and machines, respectively, and $\phi \geq 1$ is the perturbation parameter.
The bound on the smoothed number of iterations demonstrates that the proposed lower bound reflects a pessimistic scenario which is rare in practice.

\end{abstract}
\section{Introduction}
Local search algorithms are among the most widely used heuristics to tackle computationally difficult optimization problems~\cite{aarts2003local, michiels2007theoretical}, due to their simplicity and strong empirical performance.
A local search procedure begins with an initial solution and iteratively moves from a feasible solution to a neighboring one until predefined stopping criteria are met. 
The selection of an appropriate neighborhood function significantly impacts the performance of the local search. These functions determine the range of solutions that the local search procedure can explore in a single iteration.
A classic example of local search is the \emph{2-opt} neighborhood for the traveling salesman problem~\cite{aarts2003local, croes1958method, flood1956traveling}, which often yields good solutions in practice despite the weak worst-case performance.
In some forms of local search, it is possible to move to a worse feasible solution within the neighborhood, which can help explore a broader range of solutions. However, in this paper, we focus on a basic form of local search, known as \emph{iterative improvement}, which always searches for a better solution within the neighborhood of the current one.
%A basic form of local search, known as \emph{iterative improvement}, {\color{red}iteratively} searches for a better solution within the neighborhood of the current one. 
The iterative improvement stops when no further improvement can be achieved in the neighborhood, indicating that the current solution has reached a \emph{local optimum}.

In this paper, we consider one of the most fundamental scheduling problems, in which we are given a set of $n$ jobs with positive processing times $p_j (j = 1, 2, ..., n)$.
Each of these jobs needs to be processed on one of $m$ identical machines without preemption, so as to minimize the \emph{makespan}: we want the last job to be completed as early as possible.
This problem is denoted by $P{\parallel}C_{max}$~\cite{graham1979optimization}, and is strongly NP-hard~\cite{johnson1979computers}.
In consequence, many studies in the literature have suggested approximation algorithms to address this problem.
Among them, some of the first approximation algorithms~\cite{graham1966bounds, graham1969bounds} and polynomial time approximation schemes (PTAS)~\cite{hochbaum1987using} have been developed in the context of this problem.
As PTAS's are often considered impractical due to their tendency to yield excessively high running time bounds (albeit polynomial) even for moderate values of $\varepsilon$ (see~\cite{marx2008parameterized}), some studies have approached this problem by local search~\cite{brucker1997improving, finn1979linear, rohwedder2024kswap, schuurman2007performance}.
One of the simplest local search neighborhoods is the so-called \emph{jump} or \emph{move} neighborhood.
To obtain a jump neighbor, the machine allocation of one of the jobs changes. Finn and Horowitz~\cite{finn1979linear} proposed a simple improvement heuristic, called \textsc{0/1-Interchange} which is able to find a local optimum with respect to the jump neighborhood.
Brucker et al.~\cite{brucker1997improving} have provided an upper bound of $O(n^2)$ for this heuristic, and the tightness of this bound was shown by Hurkens and Vredeveld~\cite{hurkens2003local}. 
Another well-known folklore neighborhood is the \emph{swap} neighborhood, which is defined as the one consisting of all possible jumps and swaps~\cite{rohwedder2024kswap, schuurman2007performance}, where in a swap operator the machine allocations of two jobs, scheduled on two different machines, interchange (see Fig.~\ref{fig:swap-jump}).
In our previous work~\cite{rohwedder2024kswap}, we provided an upper bound of $O(n^4)$ for the number of swaps when we have only two machines in our schedule.
A natural generalization of the jump and swap neighborhoods is known as the \emph{$k$-swap}, which has been studied in this work~\cite{rohwedder2024kswap}. A $k$-swap neighbor is obtained by selecting at most $k$ jobs from two machines and interchanging the machine allocations of the selected jobs (see Fig.~\ref{fig:k-swap}).
Numerous alternative neighborhoods such as \emph{push}~\cite{schuurman2007performance}, \emph{multi-exchange}~\cite{ahuja2001multi, frangioni2004multi} and \emph{split}~\cite{brueggemann2011exponential} have been suggested in the literature to approach this problem.

\begin{figure}[t]
\centering
\includegraphics[width=0.7\textwidth]{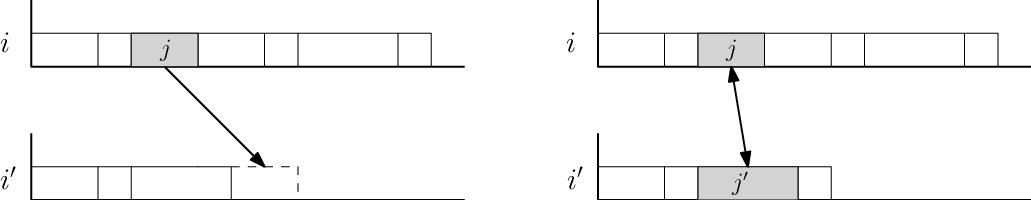}
\caption{Jump (left diagram) and swap (right diagram) operators.}
\label{fig:swap-jump}
\end{figure}

\afterpage{\begin{figure}[t!]
\centering
\includegraphics[width=0.40\textwidth]{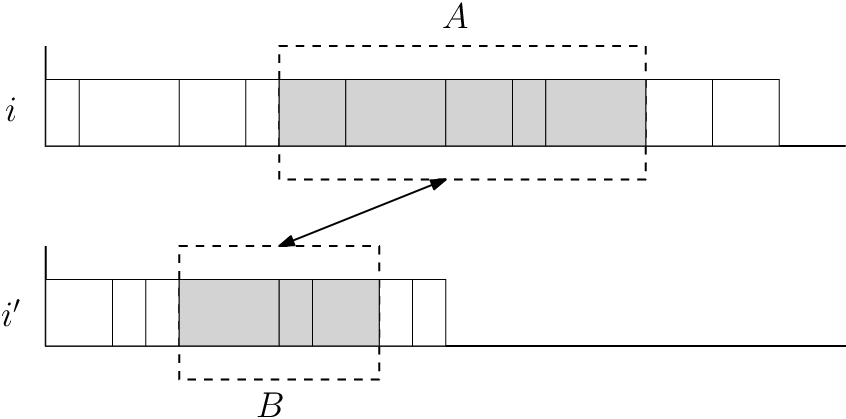}
\caption{$k$-swap operator.}
\label{fig:k-swap}
\end{figure}}

Furthermore, we provided a running time analysis of the $k$-swap neighborhood for the problem $P||C_{max}$ in our previous work~\cite{rohwedder2024kswap}. In this analysis, we showed that there exists an initial schedule as well as a sequence of 3-swaps such that the local optimum with respect to the $k$-swap neighborhood, for $k \geq 3$, is reached after $2^{\Omega(n)}$ iterations.
Moreover, our computational results on different classes of instances showed that, on average, the number of iterations required to converge to a local optimum is not excessively large.
This raises the question of whether the provided lower bound reflects a pessimistic scenario which might be rare in practice.

%Rohwedder et al.~\cite{rohwedder2024kswap} recently provided a running time analysis of the $k$-swap neighborhood for the problem $P||C_{max}$. They showed that there exists an initial schedule as well as a sequence of 3-swaps such that the local optimum with respect to the $k$-swap neighborhood, for $k \geq 3$, is reached after $2^{\Omega(n)}$ iterations.
%Moreover, their computational results on different classes of instances showed that, on average, the number of iterations required to converge to a local optimum is not excessively large.
%This raises the question of whether the provided lower bound reflects a pessimistic scenario which might be rare in practice.

To address such discrepancies between worst-case and practical performance, the notion of \emph{smoothed analysis} was introduced by Spielman and Teng~\cite{spielman2004smoothed}.
Smoothed analysis is a hybrid of worst-case and average-case analysis, designed to provide a more realistic measure of an algorithm's performance.
In the \emph{classical} model of the smoothed analysis (also known as the \emph{two-step} model), an adversary selects an arbitrary instance and then, the numbers or coefficients of this instance are (slightly) randomly perturbed by adding Gaussian-distributed random variables of mean 0 and standard deviation $\sigma > 0$ to them.
The smoothed performance is then defined as the worst expected performance among all adversarial choices; the expected value is taken over the random perturbation of the adversarial instance.
In this setting, the smaller $\sigma$, the more powerful the adversary, i.e., the analysis is closer to the worst-case analysis.
In a more general model of the smoothed analysis, known as the \emph{one-step} model, introduced by Beier and Vöcking~\cite{beier2003random}, an adversary specifies probability density functions upper-bounded by $\phi \geq 1$ for all the numbers that need to be perturbed.
Then, all the numbers are drawn independently according to their respective density function.
In this setting, the smaller $\phi$, the less powerful the adversary, i.e., the analysis is closer to the average-case analysis.
The one-step model of the smoothed analysis contains the two-step model as a special case.
Smoothed analysis was first applied to the simplex method~\cite{spielman2004smoothed}, and was successfully able to explain why this method is performing well in practice. Subsequently, it was applied to different problems and algorithms, including linear programming~\cite{blum2002smoothed, sankar2006smoothed, spielman2003smoothed}, searching and sorting~\cite{banderier2003smoothed, damerow2012smoothed, fouz2012smoothed, manthey2007smoothed}, online and approximation algorithms~\cite{becchetti2006average, blaser2013smoothed, schafer2005topology}, game theory~\cite{boros2011stochastic, chen2009settling}, and local search~\cite{arthur2006worst, englert2014worst, manthey2015smoothed, manthey2022improved}.

Motivated by the exponential lower bound established in our previous work~\cite{rohwedder2024kswap}, we perform smoothed analysis on the number of $k$-swaps for the problem $P||C_{max}$, and show that it is upper bounded by $O(m^2 \cdot n^{2k+2} \cdot \log m \cdot \phi)$, where $\phi \geq 1$ is the perturbation parameter.
While the lower bound provided in our previous study~\cite{rohwedder2024kswap} is exponential in the number of jobs, our result in this paper demonstrates a polynomial upper bound for the smoothed number of $k$-swaps in obtaining a local optimum when $k$ is constant.

\section{Preliminaries and Notation}
Recall that in the problem under consideration, there are $n$ jobs that need to be processed without interruption, on $m$ identical machines.
A job can only be processed by one machine and each machine can process at most one job at a time. The time it takes to process job $j$ is denoted by $p_j$.
The goal is to schedule the jobs such that the makespan, $C_{\max} = \max_j C_j$, is minimized, where $C_j$ denotes the time at which job $j$ is completed.
As the order in which the jobs, assigned to the same machine, are processed, does not influence the makespan, we represent a schedule by $(M_1, M_2, ... , M_m)$, where $M_i$ denotes the set of jobs scheduled on machine $i$, for $i \in \{1, 2, ... , m\}$.
Let $L_i = \sum_{j \in M_i} p_j$ denote the load of machine $i$, and $p(S) = \sum_{j \in S} p_j$ denote the summation of the processing times of a set of jobs $S$.
It is easy to see as there is no advantage in keeping a machine idle while it still needs to process at least a job, the makespan of a schedule is equal to the maximum load $L_{\max}$, i.e., $C_{\max} = L_{\max}$; we will use $C_{\max}$ and $L_{\max}$ interchangeably.
A \emph{critical machine} $i$ is a machine with $L_i = L_{\max}$, and a \emph{non-critical machine} $i'$ is a machine with $L_{i'} < L_{\max}$.
Additionally, we define an \emph{$\ell$min-load machine} as the machine with the $\ell$th minimum load among all the machines. We use the notation $L_{\ell\min}$ to represent the load of this machine.
Moreover, we denote by $L_{\max}(t), L_{i}(t)$, $L_{\ell\min}(t)$ and $M_i(t)$ the values of the corresponding variables at the start of iteration $t$.

The $k$-swap neighborhood is defined as the one consisting of swapping all combinations of at most $k$ jobs between any pair of machines. To be more precise, to obtain a $k$-swap neighbor, a set of jobs $A$ and a set of jobs $B$ are selected from machines $i$ and $i'$, respectively, such that $|A| \geq 1$, $|B| \geq 0$ and $|A| + |B| \leq k$, and then the machine allocations of these jobs are exchanged (see Fig.~\ref{fig:k-swap}). 
In order to improve the schedule, $i$ needs to be a critical machine, and $0 < p(A) - p(B) < L_i - L_{i'}$.
We say we are in a \emph{k-swap optimal solution} if, by applying the $k$-swap operator, we cannot decrease the makespan or the number of critical machines without increasing the makespan. 

%To do the analysis, we employ the one-step model of the smoothed analysis, introduced by Beier and Vöcking~\cite{beier2003random}. 
%In this model, an adversary specifies probability density functions for all the numbers that need to be perturbed. Then, all the numbers are drawn independently according to their respective density function. In this model, we have a perturbation parameter $\phi$ that controls the adversary's power. The adversary chooses probability density functions upper-bounded by $\phi$. Therefore, the larger the value of $\phi$, the more powerful the adversary is. Finally, we measure the maximum expected performance, where the maximum is taken over all the choices of the adversary, and the expectation is taken over drawing the numbers according to their density functions.
%Based on this model, we show that the smoothed number of iterations in obtaining a $k$-swap optimal solution is upper-bounded by $O(m^3 \cdot n^{2k+2} \cdot \log m \cdot \phi)$.

%\input{content/WorstCase}
\section{Smoothed Analysis}
In this section, we provide a smoothed analysis of the number of iterations required to converge to a $k$-swap optimal solution. 
To this end, we employ the one-step smoothing model~\cite{beier2003random}. 
In this model, an adversary specifies probability density functions upper-bounded by $\phi \geq 1$ for all the numbers that need to be perturbed. Then, all the numbers are drawn independently according to their respective density function.
Note that the perturbation parameter $\phi$ controls the adversary's power; the larger the value of $\phi$, the more powerful the adversary is.
Finally, we measure the maximum expected performance, where the maximum is taken over all the choices of the adversary, and the expectation is taken over drawing the numbers according to their density functions.
In our analysis, we perturb the processing times of the jobs. An adversary specifies density functions $f_1, \dotsc, f_n$ with $f_i : [0, 1] \rightarrow [0, \phi]$. An instance of the problem is obtained by drawing $p_1, \dotsc, p_n$ independently according to the densities $f_1, \dotsc, f_n$, respectively.

\begin{figure}[t!]
\centering
\includegraphics[width=0.95\textwidth]{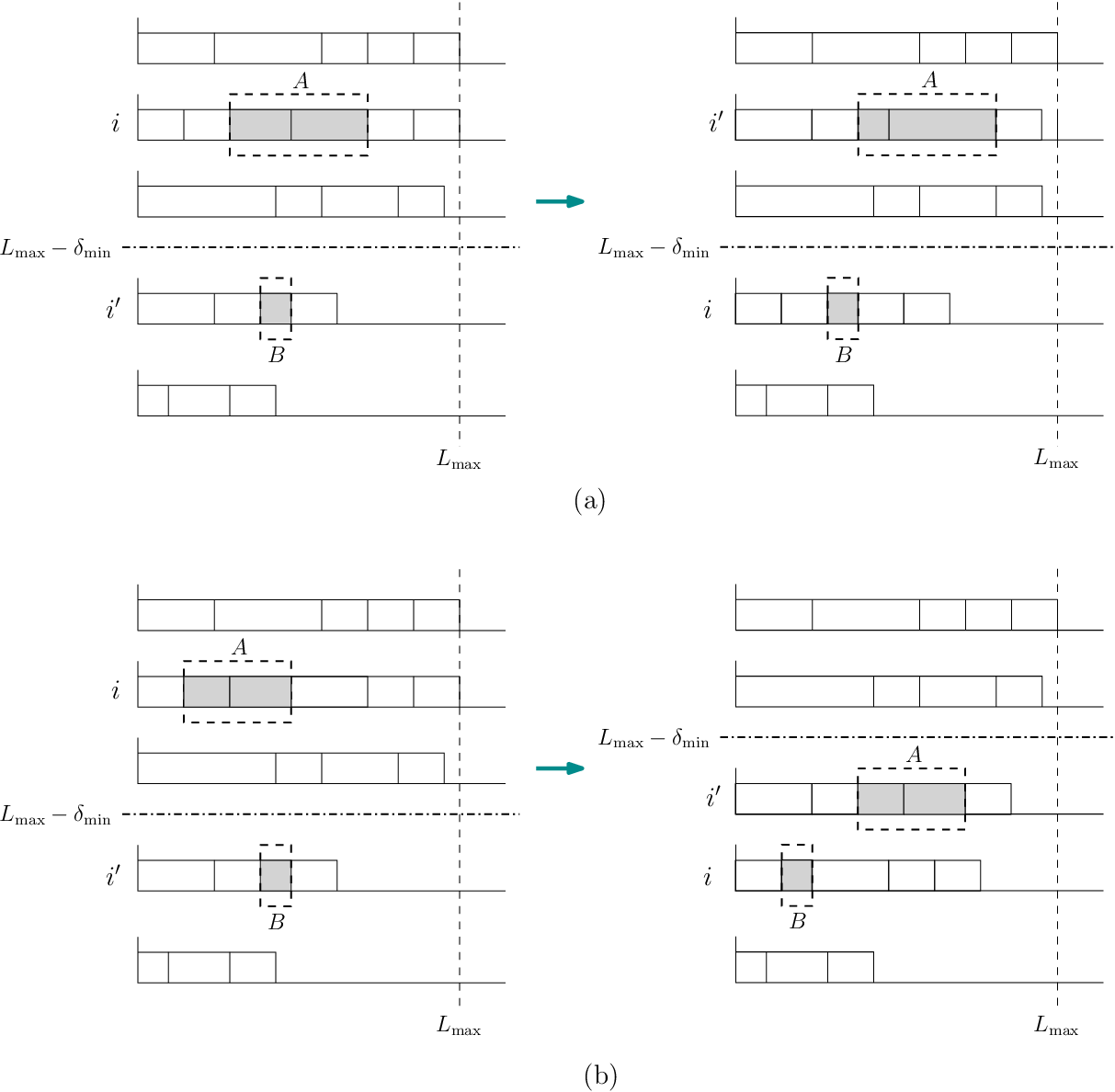}
\caption{(a) Type-1 of improving $k$-swap. After swapping $A$ with $B$, $i'$ moves to $\gamma_l$. (b) Type-2 of improving $k$-swap. After swapping $A$ with $B$, $i'$ stays in $\gamma_s$. Note that in both of the cases, machine $i$ does not necessarily move to $\gamma_s$ after the swap.}
\label{fig:k-swapiPrime}
\end{figure}

Let $\delta_{min} = \min | p(A) -p(B) |$, where the minimum is taken over all the sets of jobs $A$ and $B$ with $A \cap B = \varnothing$ and $1 \leq |A| + |B| \leq k$.
To perform the analysis, we split the set of machines in our schedule at the start of iteration $t$ into two sets $\gamma_l(t) = \{ i: L_i(t) > L_{max}(t) - \delta_{min} \}$ and $\gamma_s(t) = \{ i: L_i(t) \leq L_{max}(t) - \delta_{min} \}$.
We refer to these two sets as $\gamma_l$ and $\gamma_s$ if $t$ is clear from the context.
In order to improve the schedule in iteration $t$, we need to swap a set of jobs $A$ from a critical machine $i$ with a set of jobs $B$ from a non-critical machine $i'$ such that $|A| + |B| \leq k$, and $0 < p(A) - p(B) < L_{i}(t) - L_{i'}(t)$.
Since all the critical machines are in the set $\gamma_l(t)$ and loads of all the machines in $\gamma_l(t)$ are greater than $L_{max}(t) - \delta_{min}$, we always have $i \in \gamma_l(t)$ and $i' \in \gamma_s(t)$ in an \emph{improving} iteration $t$ (an iteration in which we find an improving $k$-swap neighbor).

In our analysis, we consider two types of improving iterations. 
In an improving $k$-swap of \emph{type-1}, machine $i'$ moves to $\gamma_l$ after swapping $A$ with $B$ in iteration $t$ (see Figure~\ref{fig:k-swapiPrime} (a)); formally, we get $i' \in \gamma_l(t+1)$. 
In an improving $k$-swap of \emph{type-2}, machine $i' $ stays in $\gamma_s$ after swapping $A$ with $B$ in iteration $t$ (see Figure~\ref{fig:k-swapiPrime} (b)); formally, we get $i' \in \gamma_s(t+1)$.
We analyze this type based on a fixed value of $\delta_{min}$, by employing a potential function and counting the number of times that this function decreases. 
We will provide an upper bound of $O(m \cdot n^k)$ for the number of consecutive iterations of type-1, and an upper bound of $O(\dfrac{m \cdot n}{\delta_{\min}})$ for the total number of iterations of type-2.
Therefore, the total number of iterations in obtaining a $k$-swap optimal solution is upper-bounded by $O(\dfrac{m^2 \cdot n^{k+1}}{\delta_{min}})$.
In the worst case, $\delta_{min}$ can be (close to) 0, resulting in a very large bound; however, in the smoothed setting, the probability that this happens is very small, and we can show a bound of $O(m^2 \cdot n^{2k+2} \cdot \log m \cdot \phi)$.

First, we show that in a sequence of consecutive $k$-swaps of type-1, the sorted vector of machine loads in the set $\gamma_s$ is pointwise non-decreasing.

%For the analysis of the first type of $k$-swaps, we first argue that as long as we are in this type of $k$-swaps, all the $L_{\ell\min}$ values are non-decreasing for all the $\ell$min-load machines in the set $\gamma_s$. The reason is that all the critical machines are in the set $\gamma_l$, and $|\gamma_s|$ is non-increasing in all the improving iterations of the first type.

\begin{lemma}
    \label{lem:llminMonotone}
    For all the $\ell$min-load machines in the set $\gamma_s$, we have $L_{\ell\min} (t+1) \geq L_{\ell\min} (t)$ when a $k$-swap operation of type-1 occurs in iteration $t$.
\end{lemma}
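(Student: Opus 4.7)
My plan is to show the inequality by exhibiting, at time $t$, a collection of $\ell$ machines each with load at most $L_{\ell\min}(t+1)$; this immediately witnesses $L_{\ell\min}(t) \leq L_{\ell\min}(t+1)$. Let $S(t+1)$ denote the $\ell$ machines carrying the smallest loads at iteration $t+1$, so $L_{\ell\min}(t+1) = \max_{j \in S(t+1)} L_j(t+1)$, and let $i$ (critical) and $i'$ be the two machines swapped in iteration $t$. The first step is to observe that $i' \notin S(t+1)$: since the swap is of type-1, $i' \in \gamma_l(t+1)$ gives $L_{i'}(t+1) > L_{\max}(t+1) - \delta_{\min}$, whereas the hypothesis that the $\ell$min-load machine lies in $\gamma_s$ at time $t+1$ forces $L_{\ell\min}(t+1) \leq L_{\max}(t+1) - \delta_{\min}$, so $L_{i'}(t+1) > L_{\ell\min}(t+1)$ and $i'$ cannot be among the $\ell$ lightest.

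Next I would split on whether $i \in S(t+1)$. If $i \notin S(t+1)$, every machine in $S(t+1)$ is untouched by the swap and keeps its old load, so the same $\ell$ machines witness the bound at time $t$. If instead $i \in S(t+1)$, I would replace $i$ by $i'$ in the witness set: the $\ell-1$ machines in $S(t+1)\setminus\{i\}$ still have unchanged loads bounded by $L_{\ell\min}(t+1)$, and for $i'$ the improving-swap inequality $p(A)-p(B) < L_i(t) - L_{i'}(t)$ rearranges to $L_{i'}(t) < L_i(t) - (p(A)-p(B)) = L_i(t+1) \leq L_{\ell\min}(t+1)$, so $i'$ at time $t$ provides the missing witness and $i' \notin S(t+1)$ guarantees that this witness is distinct from the others.

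The main obstacle, and the step where the type-1 assumption is essential, is ruling out $i' \in S(t+1)$: without type-1, $i'$ could remain in $\gamma_s(t+1)$ and sit among the $\ell$ lightest machines, after which both new loads $L_i(t+1)$ and $L_{i'}(t+1)$ could land strictly below $L_{\ell\min}(t)$ and make the claim fail. Once that case is excluded, the remainder of the argument is elementary: only two machines change load between $t$ and $t+1$, and the single algebraic identity $L_i(t+1) = L_i(t) - (p(A)-p(B))$ together with the improvement condition is enough to swap $i$ for $i'$ as a witness.
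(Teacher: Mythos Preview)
Your proof is correct and follows essentially the same approach as the paper: both arguments hinge on the type-1 assumption forcing $i'$ out of $\gamma_s$ at time $t+1$, together with the inequality $L_i(t+1) = L_i(t) - (p(A)-p(B)) > L_{i'}(t)$ from the improvement condition to handle the case where $i$ drops into the light set. The paper splits on whether $i \in \gamma_l(t+1)$ or $i \in \gamma_s(t+1)$ and notes that $|\gamma_s|$ does not increase, whereas you split on whether $i \in S(t+1)$ and exhibit an explicit witness set of $\ell$ machines at time $t$; these are equivalent formulations of the same idea.
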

\begin{proof}
    Let $i \in \gamma_l(t)$ be a critical machine and $i' \in \gamma_s(t)$, and let $A \subseteq M_i(t)$ and $B \subseteq M_{i'}(t)$ with $0 < p(A) - p(B) < L_{i}(t) - L_{i'}(t)$ be two sets of jobs such that swapping them constitute an improving $k$-swap of type-1; i.e., we get $i' \in \gamma_l(t+1)$.
    After doing the swap, machine $i$ either stays in $\gamma_l$ or moves to $\gamma_s$. As we have $i' \in \gamma_l(t+1)$, and the values of $L_{\max} - \delta_{\min}$ are non-increasing, $|\gamma_s|$ does not increase.
    If after the swap, we have $i \in \gamma_l(t+1)$, the load of none of the machines in $\gamma_s$ decreases.
    If after the swap, we have $i \in \gamma_s(t+1)$, since $p(A) - p(B) < L_{i}(t) - L_{i'}(t)$, we get $L_{i}(t+1) > L_{i'}(t)$, and therefore, the lemma holds.
\end{proof}

Next, we show that in a consecutive sequence of $k$-swaps of type-1 when a set $A$ of jobs on a critical machine is swapped with a set $B$ on an $\ell$min-load machine, then these two sets cannot be swapped again on a critical and an $\ell$min-load machine in the same sequence of consecutive $k$-swaps of type-1.

\begin{lemma}
    \label{lem:fixTwojobs2}
    Consider an improving $k$-swap of type-1 by swapping a set $A$ on a critical machine $i$ with a set $B$ on an $\ell$min-load machine $i'$ in iteration $t$. Then, we are not able to swap $A$ with $B$ on a critical machine and an $\ell$min-load machine as long as we have not performed the $k$-swap operation of type-2.
\end{lemma}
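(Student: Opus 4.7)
The plan is to argue by contradiction. Assume that within the same sequence of consecutive type-1 iterations, the $(A,B)$-swap is realized again at some later iteration $t' > t$, with $A$ on a critical machine $i^*$ and $B$ on an $\ell^*$min-load machine $i'^*$. Set $\Delta := p(A) - p(B) > 0$. The proof would proceed in three steps that together produce incompatible bounds on $L_{i'^*}(t')$.

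First, I would bound the makespan from above along the sequence. The type-1 hypothesis at iteration $t$ yields $i' \in \gamma_l(t+1)$, i.e.\ $L_{i'}(t+1) > L_{\max}(t+1) - \delta_{\min}$, which rearranges to $L_{\max}(t+1) < L_{i'}(t+1) + \delta_{\min} = L_{i'}(t) + \Delta + \delta_{\min}$. Since the makespan is monotonically non-increasing over improving iterations, this gives $L_{i^*}(t') = L_{\max}(t') \le L_{\max}(t+1) < L_{i'}(t) + \Delta + \delta_{\min}$.

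Next, I would feed this into the improvement condition at $t'$. From $L_{i^*}(t') - L_{i'^*}(t') > \Delta$ together with the previous bound, $L_{i'^*}(t') < L_{i'}(t) + \delta_{\min}$. Moreover $\Delta \ge \delta_{\min}$ (by definition of $\delta_{\min}$) forces $L_{\max}(t') - L_{i'^*}(t') > \delta_{\min}$, so $i'^* \in \gamma_s(t')$.

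Finally, I would derive a matching lower bound via \cref{lem:llminMonotone}. Since $|\gamma_s|$ is non-increasing across type-1 iterations (as shown inside the proof of \cref{lem:llminMonotone}) and since $i'^* \in \gamma_s(t')$ forces $|\gamma_s(t')| \ge \ell^*$, every intermediate iteration $s \in [t+1, t']$ satisfies $|\gamma_s(s)| \ge \ell^*$, hence the $\ell^*$-th smallest overall load at $s$ is realized inside $\gamma_s(s)$. Iterating \cref{lem:llminMonotone} then yields $L_{i'^*}(t') = L_{\ell^*\min}(t') \ge L_{\ell^*\min}(t+1)$. Sharpening this via a closer inspection of the single swap at $t$---where $i'$ leaves $\gamma_s$ and is either replaced in the sorted $\gamma_s$-vector by $L_i(t+1) = L_i(t) - \Delta > L_{i'}(t)$ or by the next larger existing entry---combined with the definition of $\delta_{\min}$ as the minimum achievable $|p(A')-p(B')|$ over admissible disjoint pairs, would push this to $L_{\ell^*\min}(t+1) \ge L_{i'}(t) + \delta_{\min}$, which together with the second step produces the desired contradiction.

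The main obstacle is the last quantitative step: \cref{lem:llminMonotone} on its own only guarantees non-strict monotonicity of the sorted $\gamma_s$-vector, so some further tie-breaking analysis is needed both to upgrade the increase at the $\ell^*$-th smallest position to at least $\delta_{\min}$ right after the initial swap, and to argue that this increment is preserved through the remaining type-1 iterations. The key ingredient is that any change inflicted on $\gamma_s$ by an improving swap has magnitude $\ge \delta_{\min}$ by the very definition of $\delta_{\min}$.
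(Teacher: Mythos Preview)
Your step~3 is where the argument breaks, and the gap is genuine rather than a matter of tie-breaking. The sharpening $L_{\ell\min}(t+1) \geq L_{i'}(t) + \delta_{\min}$ is simply false in general: if at time $t$ two distinct machines in $\gamma_s(t)$ share the load $L_{i'}(t)$, then after $i'$ leaves $\gamma_s$ the other one occupies its position in the sorted vector and $L_{\ell\min}(t+1) = L_{i'}(t)$ exactly. Your proposed key ingredient---that any change inflicted on $\gamma_s$ by an improving swap has magnitude at least $\delta_{\min}$---only constrains the two machines participating in the swap, not the spectators already sitting in $\gamma_s$; it therefore cannot force the $\ell$-th smallest load up by $\delta_{\min}$. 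With only the non-strict monotonicity of \cref{lem:llminMonotone}, your chain of inequalities is left with a $\delta_{\min}$ slack that cannot be closed.

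The idea you are missing is to track where the jobs of $A$ physically sit. After the swap at time $t$ they all lie on $i'$, and because $i' \in \gamma_l(t+1)$, machine $i'$ cannot participate in any improving swap (it is neither critical, hence not a source, nor in $\gamma_s$, hence not a target) until the first iteration $t' \geq t+1$ at which $i'$ becomes critical. At that moment one has the \emph{equality} $L_{\max}(t') = L_{i'}(t') = L_{i'}(t+1) = L_{i'}(t) + \Delta$, which is strictly sharper than your bound $L_{\max}(t') < L_{i'}(t) + \Delta + \delta_{\min}$ obtained from the type-1 condition alone. Combining this equality with plain monotonicity $L_{\ell\min}(t') \geq L_{\ell\min}(t) = L_{i'}(t)$ already yields $L_{\max}(t') - L_{\ell\min}(t') \leq \Delta = p(A) - p(B)$, so an $(A,B)$ swap on a critical and $\ell$min-load machine is not strictly improving at $t'$; and since $L_{\max} - L_{\ell\min}$ is non-increasing along the type-1 sequence, it remains non-improving thereafter. (A side remark: the lemma concerns the same $\ell$ throughout, so introducing a separate $\ell^*$ is unnecessary.)
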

\begin{proof}
    By swapping the jobs in $A$ with the jobs in $B$, and according to Lemma~\ref{lem:llminMonotone}, we get:
    \begin{align*}
        L_{i'}(t+1) - L_{\ell\min}(t+1)  &\phantom{}=  L_{i'}(t) + p(A) - p(B) - L_{\ell\min}(t+1) \\
        &\leq L_{i'}(t) + p(A) - p(B) - L_{\ell\min}(t) \\
        &= L_{i'}(t) + p(A) - p(B) - L_{i'}(t) = p(A) - p(B).
    \end{align*}
    In iteration $t+1$, the jobs of set $A$ are in machine $i'$ and machine $i'$ is in $\gamma_l(t+1)$. 
    %As we never swap jobs inside the set $\gamma_l$, as long as machine $i'$ is in $\gamma_l$, its load does not increase. 
    The jobs of set $A$ can be swapped with other jobs when machine $i'$ becomes a critical machine. If $i'$ becomes a critical machine in iteration $t'$ ($t' \geq t+1$), we get $L_{\max}(t') = L_{i'}(t') = L_{i'}(t+1)$, because $i' \in \gamma_{\ell}(t+1)$, and therefore there cannot be an improving $k$-swap with jobs on machine $i'$, as long as $i'$ is not a critical machine. 
    As we have, $L_{\ell\min}(t') \geq L_{\ell\min}(t+1)$, we get
    \begin{align*}
       L_{\max}(t') - L_{\ell\min}(t') \leq L_{i'}(t+1) - L_{\ell\min}(t+1) \leq p(A) - p(B).
    \end{align*}
    According to Lemma~\ref{lem:llminMonotone}, in a consecutive sequence of improving $k$-swaps of type-1, the values of $L_{\max} - L_{\ell\min}$ are non-increasing, and we cannot swap $A$ with $B$ on a critical machine and an $\ell$min-load machine in the remainder of the sequence of consecutive $k$-swaps of type-1.
\end{proof}

According to Lemma~\ref{lem:fixTwojobs2}, by performing a $k$-swap operation of type-1 over the sets $A \subseteq M_i(t)$ and $B \subseteq M_{i'}(t)$, such that $i'$ is an $\ell$min-load machine at the start of iteration $t$, we are not able to swap the jobs in $A$ with the jobs in $B$ on a critical machine and an $\ell$min-load machine as long as we have not performed the $k$-swap operation of the type-2. 
As we have at most $O(m \cdot n^k)$ different combinations of $A$, $B$, and $\ell$ (we have $O(n^k \cdot 2^k)$ different combinations of the sets $A$ and $B$), the number of consecutive $k$-swaps of type-1 is upper-bounded by $O(m \cdot n^k)$.

\begin{corollary}
    \label{cor:type1and2}
    In at most $O(m \cdot n^k)$ consecutive iterations, the $k$-swaps of the type-1 occur.
\end{corollary}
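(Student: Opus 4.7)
The plan is to reduce the corollary to a pigeonhole argument powered by Lemma~\ref{lem:fixTwojobs2}. Fix a maximal run of consecutive type-1 iterations. I would associate with each iteration $t$ in this run the triple $(A_t, B_t, \ell_t)$, where $A_t \subseteq M_i(t)$ is the set of jobs swapped off the critical machine, $B_t \subseteq M_{i'}(t)$ is the set of jobs swapped off the non-critical machine $i'$, and $\ell_t$ is the rank for which $i'$ is the $\ell_t$min-load machine at the start of iteration $t$.

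The first step is to verify that the map $t \mapsto (A_t, B_t, \ell_t)$ is injective on the run. This is precisely the content of Lemma~\ref{lem:fixTwojobs2}: once a type-1 swap of $(A, B)$ has occurred with $B$ coming from the $\ell$min-load machine, no later iteration in the same run can swap the same $(A, B)$ on a critical machine and an $\ell$min-load machine, because such a repetition is ruled out until a type-2 swap intervenes, and a type-2 swap would terminate the run.

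The second step is to count triples. The rank $\ell$ takes at most $m$ values. For fixed $k$, the number of ordered pairs of disjoint subsets $(A, B)$ of $\{1, \dots, n\}$ with $|A| \geq 1$ and $|A| + |B| \leq k$ is at most $\sum_{j=1}^{k} \binom{n}{j} 2^j = O(n^k)$: first pick the $j \leq k$ jobs that form $A \cup B$, then assign each to $A$ or $B$. Multiplying yields $O(m \cdot n^k)$ possible triples, which upper bounds the length of the run.

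The subtle point, which determines why the bound has to carry a factor of $m$ at all, is that Lemma~\ref{lem:fixTwojobs2} only forbids the exact combination $(A, B, \ell)$ from reappearing; the same unordered pair $(A, B)$ may legitimately be swapped again later in the run if $i'$ happens to sit at a different load-rank. I expect this to be the only place in the argument that needs explicit justification; once $\ell$ is folded into the invariant, the injectivity is immediate from the previous lemma and the counting is routine.
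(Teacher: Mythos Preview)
Your proposal is correct and follows essentially the same argument as the paper: both use Lemma~\ref{lem:fixTwojobs2} to conclude that the triple $(A,B,\ell)$ cannot repeat within a run of consecutive type-1 swaps, and then bound the number of such triples by $O(m\cdot n^k)$ via the same count (the paper writes it as $O(n^k\cdot 2^k)$ choices for $(A,B)$ times $m$ choices for $\ell$). Your explicit framing of the map $t\mapsto(A_t,B_t,\ell_t)$ as an injection and your remark on why the factor $m$ is needed are cleaner than the paper's presentation, but the content is the same.
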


To analyze the number of $k$-swaps of type-2, we make use of the potential function defined by $\Phi = \sum\limits_{i,i'} |L_i - L_{i'}|$. 
Assume w.l.o.g. that $L_1 \geq L_2 \geq \dotsc \geq L_m$. Then, $\Phi = 2 \sum_{1 \leq i < i' \leq m} (L_i - L_{i'}) = 2\sum_{1 \leq i < i' \leq m} (m-2i+1)L_i$. 
As we know that the total sum of machine loads is upper-bounded by $n$, we get $\Phi \leq 2m \cdot n$.

\begin{observation}
    \label{obs:phiUpperBound}
    We have $\Phi(1) \leq 2 m \cdot n$.
\end{observation}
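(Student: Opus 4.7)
The plan is to make precise the informal derivation given in the paragraph immediately preceding the statement. The only input I need from the smoothing model is that each perturbed processing time lies in $[0,1]$: the adversarial densities $f_j$ are supported on $[0,1]$, so every realization satisfies $p_j \leq 1$, and hence the total load at the start of iteration $1$ is
\[
\sum_{i=1}^{m} L_i(1) \;=\; \sum_{j=1}^{n} p_j \;\leq\; n.
\]

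Given this, I would relabel the machines so that $L_1(1) \geq L_2(1) \geq \dots \geq L_m(1)$, which lets me drop the absolute values and pair up each unordered pair $(i,i')$ exactly once, giving
\[
\Phi(1) \;=\; \sum_{i,i'} |L_i(1) - L_{i'}(1)| \;=\; 2 \sum_{1 \leq i < i' \leq m} \bigl( L_i(1) - L_{i'}(1) \bigr).
\]
Using $L_{i'}(1) \geq 0$ to bound each summand by $L_i(1)$, and then counting how many pairs each load appears in, yields
\[
\Phi(1) \;\leq\; 2 \sum_{1 \leq i < i' \leq m} L_i(1) \;=\; 2 \sum_{i=1}^{m-1} (m-i)\, L_i(1) \;\leq\; 2(m-1) \sum_{i=1}^{m} L_i(1) \;\leq\; 2(m-1)\,n \;\leq\; 2 m n.
\]

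There is essentially no obstacle here: the observation is a routine bookkeeping step whose sole purpose is to pin down the initial value of the potential for later use when counting the number of type-2 iterations (where the sum of all decreases in $\Phi$ will be bounded against $\Phi(1)$). The one easy-to-miss point is that $p_j \leq 1$ follows from the support of the perturbation densities $f_j$ specified at the beginning of Section 3, rather than from any structural property chosen by the adversary; without this support constraint the bound on $\sum_i L_i(1)$ would not hold.
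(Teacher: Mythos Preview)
Your proof is correct and follows essentially the same route as the paper: the paper's justification (given in the paragraph immediately preceding the observation) likewise sorts the loads, rewrites $\Phi$ as $2\sum_{i<i'}(L_i - L_{i'})$, and then uses $\sum_i L_i \le n$ to conclude $\Phi \le 2mn$. Your version is slightly cleaner in that you simply drop the nonnegative $L_{i'}$ terms rather than computing the exact coefficient $(m-2i+1)$, but the idea is identical.
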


Next, we provide a lower bound for the amount that $\Phi$ decreases in each iteration of this case. To do so, first, we provide the following lemma.

\begin{lemma}
    \label{lem:graterThanZeroInequality}
     Let $a, b, c\in \mathbb{R}$ such that $0 < c < a - b$. We have $|a - c| + |b + c| \leq |a| + |b|$.
\end{lemma}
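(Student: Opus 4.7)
The plan is to avoid a tedious sign-based case analysis on $a$, $b$, $a-c$, $b+c$ by invoking the identity $|x| + |y| = \max(|x+y|,\,|x-y|)$, which holds for every pair of real numbers. The key observation that makes this work is that a swap preserves the sum, namely $(a-c) + (b+c) = a + b$, while changing the difference from $a - b$ to $(a-c) - (b+c) = a - b - 2c$. So the two expressions whose magnitudes we compare share the ``sum coordinate'' and differ only in the ``difference coordinate''.

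Applying the max identity to both sides, the inequality to prove becomes
\[ \max\bigl(|a+b|,\, |a - b - 2c|\bigr) \;\leq\; \max\bigl(|a+b|,\, |a-b|\bigr). \]
Since the first entries coincide, it only remains to compare the second entries, i.e., to show $|a-b-2c| \leq |a-b|$. The hypothesis $0 < c < a - b$ gives $a - b > 0$, so $|a-b| = a-b$. Multiplying $0 < c < a - b$ by $-2$ gives $-2(a-b) < -2c < 0$, and adding $a-b$ yields $-(a-b) < a - b - 2c < a - b$, which is exactly $|a-b-2c| < a-b$. This finishes the argument.

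I do not anticipate any real obstacle: the proof is essentially a one-line calculation once the max identity is in hand. The main pitfall to avoid is being lured into writing out the four sign cases by hand, which works but is much longer and obscures the conceptual content, namely that the swap preserves the total while shrinking the gap towards zero.
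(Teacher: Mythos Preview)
Your proof is correct. The identity $|x|+|y|=\max(|x+y|,|x-y|)$ reduces the claim to $|a-b-2c|\le|a-b|$, and your chain of inequalities establishes this cleanly. One minor phrasing point: showing the second entries satisfy $|a-b-2c|\le|a-b|$ is \emph{sufficient} for the max comparison, not equivalent to it; but you only use the sufficient direction, so there is no gap.

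The paper takes a different, equally short route: setting $\lambda=c/(a-b)\in(0,1)$, one has $a-c=(1-\lambda)a+\lambda b$ and $b+c=(1-\lambda)b+\lambda a$, and convexity of the absolute value gives
\[
|a-c|+|b+c|\le(1-\lambda)|a|+\lambda|b|+(1-\lambda)|b|+\lambda|a|=|a|+|b|.
\]
The paper's argument generalizes verbatim to any convex function in place of $|\cdot|$. Yours instead exploits a structural identity specific to the absolute value and makes explicit that the operation preserves the sum $a+b$ while strictly shrinking the gap $|a-b|$; this is conceptually closer to what the surrounding potential argument actually uses.
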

\begin{proof}
Let $\lambda = \frac{c}{a-b}$. Since $0 < c < a - b$, we have $0 < \lambda < 1$. As the absolute value is a convex function, we have
    \[
        |a - c| + |b + c|
        = |(1 - \lambda)a + \lambda b| + |(1 - \lambda)b + \lambda a|
        \leq (1 - \lambda)|a| + \lambda |b|  + (1 - \lambda) |b| + \lambda |a|
        = |a| + |b|. \qedhere
    \]
\end{proof}

Furthermore, since a $k$-swap of type-2 between a critical machine $i$ and a machine $i'$ in iteration $t$ results in a load for $i'$ that is at least $\delta_{min}$ less than the makespan, i.e., $L_{i'}(t+1) \leq L_{\max}(t+1) - \delta_{\min}$, we have the following observation.

\begin{observation}
    \label{obs:deltaMinDecrease}
    Consider an improving $k$-swap of type-2 by swapping a set $A$ on a critical machine $i$ with a set $B$ on a machine $i'$. Then, we have $(L_i(t) - L_{i'}(t)) - (p(A) -  p(B)) \geq \delta_{min}$.
\end{observation}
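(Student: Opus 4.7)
The plan is a direct unfolding of the definitions: just combine the defining property of a type-2 swap (namely that $i'$ remains in $\gamma_s$ after the swap) with the fact that the makespan cannot increase along an improving iteration.

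First I would write down what happens to $L_{i'}$ explicitly: since the swap exchanges $A$ and $B$ between $i$ and $i'$, we have $L_{i'}(t+1) = L_{i'}(t) + p(A) - p(B)$. Next, I would invoke the type-2 hypothesis, which by definition gives $i' \in \gamma_s(t+1)$, i.e.\ $L_{i'}(t+1) \leq L_{\max}(t+1) - \delta_{\min}$. Finally, because the iteration is improving, the makespan does not grow, so $L_{\max}(t+1) \leq L_{\max}(t) = L_i(t)$ (the last equality since $i$ was a critical machine at time $t$). Stitching these three facts together yields
\[
L_{i'}(t) + p(A) - p(B) \;=\; L_{i'}(t+1) \;\leq\; L_{\max}(t+1) - \delta_{\min} \;\leq\; L_i(t) - \delta_{\min},
\]
and rearranging gives exactly $(L_i(t) - L_{i'}(t)) - (p(A) - p(B)) \geq \delta_{\min}$.

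There is no real obstacle here; the only step that requires a brief comment is the claim $L_{\max}(t+1) \leq L_{\max}(t)$, which follows from the convention adopted in the paper that an improving $k$-swap never increases the makespan (it either strictly decreases the makespan or strictly decreases the number of critical machines while keeping the makespan constant). I would state this observation before the chain of inequalities so that the reader sees immediately why $L_{\max}(t+1)$ can be replaced by $L_i(t)$.
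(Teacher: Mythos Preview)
Your proof is correct and is essentially the same argument the paper gives: the text preceding the observation justifies it in one line by noting that a type-2 swap leaves $L_{i'}(t+1) \leq L_{\max}(t+1) - \delta_{\min}$, which together with $L_{i'}(t+1) = L_{i'}(t) + p(A) - p(B)$ and $L_{\max}(t+1) \leq L_{\max}(t) = L_i(t)$ is exactly your chain of inequalities. You have simply spelled out the steps a bit more explicitly.
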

%\begin{proof}
    %As machine $i'$ remains in the set $\gamma_s$ after the swap, we have 
    %\begin{align*}
        %L_{i'}(t+1) \leq L_{\max}(t+1) - \delta_{\min}.
    %\end{align*}
    %Therefore, 
    %\begin{align*}
        %L_{i'}(t) + p(A) -  p(B) \leq L_{\max}(t+1) - \delta_{\min}
        %\leq L_{\max}(t) - \delta_{\min}
        %= L_{i}(t) - \delta_{\min}.
    %\end{align*}
%\end{proof}

Subsequently, we show that by performing a $k$-swap operation of type-2, the value of $\Phi$ decreases by at least $4\delta_{\min}$.

\begin{lemma}
    \label{lem:PhiLowerBound}
    Consider an improving $k$-swap of type-2 in iteration $t$.
    We have $\Phi(t) - \Phi(t+1)  \geq 4\delta_{\min}$.
\end{lemma}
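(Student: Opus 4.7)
The plan is to decompose the change in $\Phi$ according to which pair of machines contributes, exploiting the fact that only the loads of $i$ and $i'$ change. Set $c := p(A) - p(B)$. Because $A$ and $B$ are disjoint with $1 \leq |A|+|B| \leq k$ and $p(A)-p(B) > 0$, the definition of $\delta_{\min}$ immediately gives $c \geq \delta_{\min}$. After the swap, $L_i$ decreases by $c$ and $L_{i'}$ increases by $c$, while all other loads are unchanged; so only the pairs that involve $i$ or $i'$ contribute to $\Phi(t)-\Phi(t+1)$.

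First I would handle the pairs that involve a third machine $j \notin \{i,i'\}$. For such $j$, the four sum terms indexed by $(i,j), (j,i), (i',j), (j,i')$ change from $2|L_i-L_j| + 2|L_{i'}-L_j|$ to $2|L_i-c-L_j|+2|L_{i'}+c-L_j|$. Applying Lemma~\ref{lem:graterThanZeroInequality} with $a = L_i - L_j$ and $b = L_{i'} - L_j$ — noting that $a-b = L_i - L_{i'} > c$ by the improvement condition, so the hypothesis $0 < c < a-b$ holds — shows that this contribution is non-increasing. Thus these pairs never hurt $\Phi$.

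Next I would focus on the pair $(i,i')$ itself, which contributes $2|L_i-L_{i'}|$. Before the swap this equals $2(L_i(t) - L_{i'}(t))$; after the swap it equals $2|L_i(t) - L_{i'}(t) - 2c|$. The decrease is therefore
\[
2(L_i(t)-L_{i'}(t)) - 2\bigl|L_i(t)-L_{i'}(t)-2c\bigr| \;\geq\; \min\!\bigl(4c,\; 4((L_i(t)-L_{i'}(t)) - c)\bigr),
\]
which is a simple case split depending on the sign of $L_i(t)-L_{i'}(t)-2c$. Now $c \geq \delta_{\min}$ from the first paragraph, and $(L_i(t)-L_{i'}(t)) - c \geq \delta_{\min}$ by Observation~\ref{obs:deltaMinDecrease} (this is exactly where the type-2 assumption enters). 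Hence both quantities inside the minimum are at least $4\delta_{\min}$, so the $(i,i')$ pair alone accounts for a drop of at least $4\delta_{\min}$ in $\Phi$. Combining with the non-increase on all other pairs gives $\Phi(t)-\Phi(t+1)\geq 4\delta_{\min}$.

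The main obstacle is really just the case analysis for the $(i,i')$ pair, together with recognizing that type-2 (not type-1) is what allows us to lower-bound $(L_i-L_{i'})-c$ by $\delta_{\min}$ via Observation~\ref{obs:deltaMinDecrease}; without that, the minimum could degenerate when $c$ is close to $L_i-L_{i'}$. Everything else follows mechanically from Lemma~\ref{lem:graterThanZeroInequality} applied pair-by-pair.
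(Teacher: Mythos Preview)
Your proof is correct and follows essentially the same approach as the paper: decompose $\Delta_\Phi$ into the $(i,i')$ pair and the third-machine pairs, handle the latter via Lemma~\ref{lem:graterThanZeroInequality}, and resolve the $(i,i')$ pair by the same case split on the sign of $L_i-L_{i'}-2c$, invoking $c\ge\delta_{\min}$ in one branch and Observation~\ref{obs:deltaMinDecrease} in the other. Your presentation is slightly more compact (phrasing the case split as a $\min$), but the argument is identical.
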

\begin{proof}
    Assume the $k$-swap of type-2 is obtained by swapping a set $A$ on a critical machine $i$ with a set $B$ on a machine $i'$ in iteration $t$.
    Let $\Delta_{\Phi} = \Phi(t) - \Phi(t+1)$. We have
    \begin{align*}
        \Delta_{\Phi} &= 2|L_i(t) - L_{i'}(t)| - 2|L_i(t) - L_{i'}(t) - 2(p(A) -  p(B))| \\
        &\quad+ 2\sum_{i'' \neq i,i'}|L_i(t) - L_{i''}(t)| - 2\sum_{i'' \neq i,i'}|L_i(t) - p(A) + p(B) - L_{i''}(t)| \\
        &\quad+ 2\sum_{i'' \neq i,i'}|L_{i'}(t) - L_{i''}(t)| - 2\sum_{i'' \neq i,i'}|L_{i'}(t) + p(A) -  p(B) - L_{i''}(t)|.
    \end{align*}
    As swapping $A$ with $B$ in iteration $t$ yields an improving $k$-swap, we know that $0 < p(A) - p(B) < L_{i}(t) - L_{i'}(t)$.
    Hence, by Lemma~\ref{lem:graterThanZeroInequality}, we have $|L_i(t) - L_{i''}(t)| + |L_{i'}(t) - L_{i''}(t)| \geq |L_i(t) - L_{i''}(t) - (p(A) - p(B))| + |L_{i'}(t) - L_{i''}(t) + p(A) -  p(B)|$ .
    %To invoke Lemma~\ref{lem:graterThanZeroInequality}, let $a = L_i(t) - L_{i''}(t), b = L_{i'}(t) - L_{i''}(t)$ and $c = p(A) - p(B)$. As a result of this lemma, we get $2\sum_{i'' \neq i,i'}|L_i(t) - L_{i''}(t)| - 2\sum_{i'' \neq i,i'}|L_i(t) - p(A) + p(B) - L_{i''}(t)| + 2\sum_{i'' \neq i,i'}|L_{i'}(t) - L_{i''}(t)| - 2\sum_{i'' \neq i,i'}|L_{i'}(t) + p(A) -  p(B) - L_{i''}(t)| \geq 0$.
    Therefore, 
    \begin{align*}
        \Delta_{\Phi} \geq 2|L_i(t) - L_{i'}(t)| - 2|L_i(t) - L_{i'}(t) - 2(p(A) - p(B))|.
    \end{align*}
    If $p(A) - p(B) \leq \frac{L_i(t) - L_{i'}(t)}{2}$, then $|L_i(t) - L_{i'}(t) - 2(p(A) - p(B))| = L_i(t) - L_{i'}(t) - 2(p(A) - p(B))$ and we have
    \begin{align*}
        \Delta_{\Phi} \geq 2(L_i(t) - L_{i'}(t)) - 2(L_i(t) - L_{i'}(t) - 2(p(A) - p(B))) = 4(p(A) - p(B)) \geq 4\delta_{\min}.
    \end{align*}
    On the other hand, if $p(A) - p(B) > \frac{L_i(t) - L_{i'}(t)}{2}$, then $|L_i(t) - L_{i'}(t) - 2(p(A) - p(B))| = 2(p(A) - p(B)) - (L_i(t) - L_{i'}(t))$. By Observation~\ref{obs:deltaMinDecrease}, we have
    \[
        \Delta_{\Phi} \geq 2(L_i(t) - L_{i'}(t)) + 2(L_i(t) - L_{i'}(t) - 2(p(A) - p(B))) = 4(L_i(t) - L_{i'}(t)) - 4(p(A) - p(B)) \geq 4\delta_{\min}. \qedhere
    \]
\end{proof}

Using Observation~\ref{obs:phiUpperBound} and Lemma~\ref{lem:PhiLowerBound}, we can bound the number of improving $k$-swaps of type-2 as in Corollary~\ref{cor:type3}.

\begin{corollary}
    \label{cor:type3}
    In at most $O(\dfrac{m \cdot n}{\delta_{\min}})$ iterations, the $k$-swaps of type-2 occur.
\end{corollary}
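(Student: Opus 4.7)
My plan is to obtain the bound by a standard potential-function telescoping argument, using Observation~\ref{obs:phiUpperBound} and Lemma~\ref{lem:PhiLowerBound}. Since $\Phi$ is a sum of absolute values, $\Phi(t) \geq 0$ for every $t$, and Observation~\ref{obs:phiUpperBound} gives $\Phi(1) \leq 2 m n$. By Lemma~\ref{lem:PhiLowerBound}, every type-2 iteration decreases $\Phi$ by at least $4\delta_{\min}$. Provided I can also certify that $\Phi$ never increases during the remaining (type-1) iterations, the number $T_2$ of type-2 iterations must satisfy $4\delta_{\min}\cdot T_2 \leq \Phi(1) - \Phi(T{+}1) \leq 2 m n$, which yields $T_2 = O\!\left(m n / \delta_{\min}\right)$ as claimed.

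The only point that still requires justification is the monotonicity of $\Phi$ across type-1 iterations, and this is essentially free once one inspects the proof of Lemma~\ref{lem:PhiLowerBound}. The application of Lemma~\ref{lem:graterThanZeroInequality} to the pairs involving machines $i'' \neq i,i'$ relies only on the improvement condition $0 < p(A) - p(B) < L_i(t) - L_{i'}(t)$, so the contribution of those ``other pairs'' to $\Delta_\Phi$ is non-negative regardless of whether the swap is of type-1 or type-2. What remains is the $(i,i')$ contribution $2|L_i(t) - L_{i'}(t)| - 2|L_i(t) - L_{i'}(t) - 2(p(A) - p(B))|$, and the same case split on whether $p(A) - p(B) \leq \frac{1}{2}(L_i(t) - L_{i'}(t))$ or not already carried out in the proof of Lemma~\ref{lem:PhiLowerBound} shows that this quantity is strictly positive under the improvement condition (type-2 is only needed to upgrade ``positive'' to ``at least $4\delta_{\min}$''). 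Hence $\Phi$ is non-increasing throughout the entire run of the algorithm.

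I do not anticipate a real obstacle: once the monotonicity observation above is stated, the corollary reduces to a one-line arithmetic with the initial bound $2m n$, the per-step drop $4\delta_{\min}$, and the non-negativity of $\Phi$. If I wanted to avoid re-invoking the case analysis, an alternative would be to reformulate Lemma~\ref{lem:PhiLowerBound} as a single statement saying $\Phi$ is non-increasing and drops by $\geq 4\delta_{\min}$ on type-2 iterations, after which the corollary is literally a single line.
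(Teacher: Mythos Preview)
Your proposal is correct and follows the same potential-function approach the paper intends: the paper's own justification of Corollary~\ref{cor:type3} is just the one-line reference ``Using Observation~\ref{obs:phiUpperBound} and Lemma~\ref{lem:PhiLowerBound}\ldots''. You have usefully made explicit the one point the paper glosses over, namely that $\Phi$ is non-increasing also across type-1 iterations, and your extraction of this from the case analysis in the proof of Lemma~\ref{lem:PhiLowerBound} is sound.
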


As the consecutive number of improving $k$-swaps of type-1 is upper-bounded by $O(m \cdot n^k)$ (Corollary~\ref{cor:type1and2}), we can bound the total number of iterations as in Corollary~\ref{cor:totalkSwap}.

\begin{corollary}
    \label{cor:totalkSwap}
    The total number of iterations in obtaining a $k$-swap optimal solution is upper bounded by $O(\dfrac{m^2 \cdot n^{k+1}}{\delta_{\min}})$.
\end{corollary}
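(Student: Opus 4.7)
The plan is to combine Corollary~\ref{cor:type1and2} and Corollary~\ref{cor:type3} by a simple interleaving argument. Every improving iteration is, by definition, either of type-1 or type-2, so the entire run of the local search decomposes into an alternating sequence of maximal blocks of consecutive type-1 iterations separated by single type-2 iterations (with possibly empty blocks at the ends or between two consecutive type-2 iterations).

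First, I would invoke Corollary~\ref{cor:type3} to bound the total number of type-2 iterations across the whole run by $O(m \cdot n / \delta_{\min})$. This immediately bounds the number of blocks of consecutive type-1 iterations by $O(m \cdot n / \delta_{\min}) + 1$, since each maximal type-1 block is flanked by type-2 iterations (or by the start or end of the run).

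Next, I would apply Corollary~\ref{cor:type1and2} to each such block individually: every maximal block of consecutive type-1 iterations has length at most $O(m \cdot n^k)$. Multiplying the number of blocks by the maximum block size gives a bound of $O(m \cdot n / \delta_{\min}) \cdot O(m \cdot n^k) = O(m^2 \cdot n^{k+1} / \delta_{\min})$ on the total number of type-1 iterations, which dominates the $O(m \cdot n / \delta_{\min})$ count of type-2 iterations. Summing the two contributions yields the claimed bound.

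There is no real obstacle here: the corollary is essentially an arithmetic consequence of the two previous corollaries. The only subtlety worth verifying is that Corollary~\ref{cor:type1and2} really does apply blockwise (i.e. that the $O(m \cdot n^k)$ count for consecutive type-1 iterations is not a one-shot budget over the whole execution but resets after each type-2 step). This follows from the proof of Lemma~\ref{lem:fixTwojobs2}, which forbids re-swapping a pair $(A,B)$ only within a single run of consecutive type-1 iterations, so the counting argument can indeed be restarted after each intervening type-2 iteration.
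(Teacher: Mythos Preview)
Your proposal is correct and matches the paper's approach exactly: the paper states the corollary without a written proof, justifying it only by the sentence ``As the consecutive number of improving $k$-swaps of type-1 is upper-bounded by $O(m \cdot n^k)$ (Corollary~\ref{cor:type1and2}), we can bound the total number of iterations as in Corollary~\ref{cor:totalkSwap}.'' Your interleaving argument (bounding the number of type-1 blocks by one more than the number of type-2 steps, then multiplying by the per-block bound) is precisely the intended reasoning, and your remark that the $O(m\cdot n^k)$ budget resets after each type-2 step is the right check.
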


To bound the expected number of iterations in the smoothed analysis, we bound the probability that $\delta_{\min}$ is small.

\begin{lemma}
\label{lem:deltaMinProb2}
    Let $\alpha \in (0, k]$. Then, $\probP(\delta_{min} \leq \alpha) \leq 2^{k+1} \cdot n^k \cdot \alpha \cdot \phi$.
\end{lemma}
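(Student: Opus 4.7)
The plan is to upper-bound $\probP(\delta_{min}\leq\alpha)$ by a union bound over all candidate pairs $\{A,B\}$ of disjoint subsets of $\{1,\dotsc,n\}$ with $1 \leq |A|+|B| \leq k$: the event $\delta_{min}\leq\alpha$ is exactly the existence of such a pair with $|p(A)-p(B)|\leq\alpha$. I would index the union by \emph{unordered} pairs $\{A,B\}$, since $|p(A)-p(B)|$ is symmetric in its two arguments; committing to ordered pairs at the outset would inflate the final constant by a factor of two.

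For each fixed $\{A,B\}$ I would apply the standard ``isolate one coordinate'' trick: pick any $j \in A\cup B$ (possible because $|A|+|B|\geq 1$) and condition on all other processing times. Then $p(A)-p(B) = \sigma p_j + c$ for some sign $\sigma\in\{-1,+1\}$ and a constant $c$ determined by the conditioning, so the event $|p(A)-p(B)|\leq\alpha$ reduces to $p_j$ lying in an interval of length $2\alpha$. Since the density $f_j$ is bounded by $\phi$, the conditional (and hence unconditional) probability of this event is at most $2\alpha\phi$.

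It remains to count unordered disjoint pairs with $|A|+|B|=s$: there are $\binom{n}{s}2^{s-1}$ of them (choose the $s$ elements of $A\cup B$, then halve the $2^s$ ordered partitions, which is legitimate because $A\neq B$ whenever $s\geq 1$). Union-bounding and substituting the per-pair estimate yields
\[
\probP(\delta_{min}\leq\alpha)\leq \sum_{s=1}^{k}\binom{n}{s}2^{s-1}\cdot 2\alpha\phi = \alpha\phi\sum_{s=1}^{k}\binom{n}{s}2^s,
\]
and the geometric-series estimate $\sum_{s=1}^k (2n)^s \leq 2(2n)^k$ (valid for $n\geq 1$) delivers the claimed $2^{k+1}n^k\alpha\phi$. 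The main obstacle is really the bookkeeping with ordered versus unordered pairs; the probabilistic step is routine. The hypothesis $\alpha \leq k$ is not used in the derivation itself and only serves to ensure that the stated bound is meaningful (below one for reasonable parameters).
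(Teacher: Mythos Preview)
Your proposal is correct and follows essentially the same approach as the paper: a union bound over admissible pairs $(A,B)$ combined with the principle of deferred decisions (your ``isolate one coordinate'' trick) to get the per-pair bound $2\alpha\phi$. The paper's proof is terser in the counting step, simply invoking $(2n)^k$ as a crude upper bound on the number of pairs rather than your more careful $\sum_{s=1}^{k}\binom{n}{s}2^{s-1}$, but both lead to the same constant $2^{k+1}n^k$.
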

\begin{proof}
    Let $A$ and $B$ be two disjoint subsets of jobs such that $1 \leq |A| + |B| \leq k$. By the principle of deferred decisions, we have that $\probP(|p(A) - p(B)| \leq \alpha) =  2 \cdot \alpha \cdot \phi$.
    By taking a union bound over the $(2n)^k$ possibilities for the sets $A$ and $B$, we get the desired result.
\end{proof}

%\begin{observation}
    %\label{obs:deltaMinProb1}
     %Let  $A \subseteq M_i(t)$, $B \subseteq M_{i'}(t)$ and $\alpha \in [0, k]$. We have $\probP(|p(A) - p(B)| \leq \alpha) =  2 \cdot \alpha \cdot \phi$.
%\end{observation}
%\begin{proof} 
    %Let $j \in A$. As density functions $f_1, \dotsc, f_j, \dotsc, f_n$ are upper-bounded by $\phi$, and the values of $p_1, \dotsc, p_j, \dotsc, p_n$ are drawn independently according to the densities $f_1, \dotsc, f_j, \dotsc, f_n$ we get:
   %\begin{align*}
        %\probP(|p(A) - p(B)| \leq \alpha) &= \probP(|p_j + p(A \setminus \{ j \}) - p(B)| \leq \alpha) \\
        %&= \int_{- \infty}^{\infty} \probP(p(A \setminus \{ j \}) - p(B) = \beta) \cdot \probP(|p_j + \beta| \leq \alpha \mid p(A \setminus \{ j \}) - p(B) = \beta) \, d\beta \\
        %&= \int_{- \infty}^{\infty} \probP(p(A \setminus \{ j \}) - p(B) = \beta) \cdot \probP(|p_j + \beta| \leq \alpha) \, d\beta \\
        %&\leq \inf_{\beta} \probP(|p_j + \beta| \leq \alpha) \\
        %&= \inf_{\beta} \int_{-\alpha -\beta}^{\alpha -\beta} \probP(p_j = \zeta) \, d\zeta \\
        %&\leq 2 \cdot \alpha \cdot \phi.
    %\end{align*}
%\end{proof}

%We defined $\delta_{min} = \min\limits_{A, B} | p(A) -p(B) |$ such that $1 \leq |A| + |B| \leq k$. 
%According to Observation~\ref{obs:deltaMinProb1} and a union bound over the choices of the jobs in the sets $A$ and $B$, we get the following corollary.

%\begin{corollary}
    %\label{cor:deltaMinProb2}
     %Let $\alpha \in [0, k]$. We have $\probP(\delta_{min} \leq \alpha) \leq 2 \cdot n^k \cdot \alpha \cdot \phi$.
%\end{corollary}

\begin{theorem}
    \label{thm:expected2Swap}
    Let $f_1, \dotsc, f_n :  [0, 1] \rightarrow [0, \phi]$ be arbitrary density functions bounded by $\phi \geq 1$, and let $p_1, \dotsc, p_n$ be the processing times of the jobs drawn independently according to $f_1, \dotsc, f_n$, respectively. Let $T$ denote the random variable of the number of iterations to find a $k$-swap optimal solution. Then
    \begin{center}
        $\exP(T) = O(m^2 \cdot n^{2k+2} \cdot \log m \cdot \phi).$
    \end{center}
\end{theorem}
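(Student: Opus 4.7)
The plan is to combine the deterministic upper bound from Corollary~\ref{cor:totalkSwap} with the tail bound on $\delta_{\min}$ from Lemma~\ref{lem:deltaMinProb2}. Writing $D = c \cdot m^2 \cdot n^{k+1}$ for the absolute constant $c$ hidden in Corollary~\ref{cor:totalkSwap}, we have $T \leq D/\delta_{\min}$ for any realization of the processing times, so for every threshold $t>0$,
\[
    \probP(T > t) \;\leq\; \probP\!\left(\delta_{\min} < D/t\right).
\]
This transfers the question to an integral in $t$ of the tail of $\delta_{\min}$.

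Next, I would observe that the iterative-improvement local search visits every schedule at most once (each improving iteration strictly decreases the lexicographic pair (makespan, number of critical machines)), so $T$ is bounded deterministically by the trivial quantity $m^n$. This will be needed to stop the integration at a finite upper limit, avoiding the divergence one would otherwise get from only using Lemma~\ref{lem:deltaMinProb2}. With this in hand I would compute
\[
    \exP(T) \;=\; \int_{0}^{m^n} \probP(T > t)\, dt,
\]
and split the integral at $t_0 = D/k$, which is the smallest value of $t$ for which the argument $D/t$ falls into the regime $(0,k]$ where Lemma~\ref{lem:deltaMinProb2} applies.

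On the range $[0,t_0]$ I would use the trivial bound $\probP(T>t)\leq 1$, contributing $t_0 = O(m^2 \cdot n^{k+1})$. On the range $[t_0,m^n]$ I would plug in Lemma~\ref{lem:deltaMinProb2} to get $\probP(T>t) \leq 2^{k+1}\cdot n^k \cdot (D/t) \cdot \phi = O(m^2\cdot n^{2k+1}\cdot \phi)/t$. Integrating $1/t$ from $t_0$ to $m^n$ produces $\ln(m^n/t_0) = O(n\log m)$, so this piece contributes $O(m^2 \cdot n^{2k+1} \cdot \phi \cdot n \log m) = O(m^2 \cdot n^{2k+2} \cdot \log m \cdot \phi)$, which dominates the first piece and yields the stated bound.

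The only delicate step is the choice of the cutoff: one must not use the Lemma~\ref{lem:deltaMinProb2} bound when $D/t > k$ (since the lemma is stated for $\alpha \in (0,k]$), and one must have a finite upper limit of integration so the $\int 1/t\, dt$ does not diverge. The trivial bound $T \leq m^n$ handles the latter, and the $\log$ of the ratio $m^n/t_0$ is precisely what produces the $n\log m$ factor appearing in the theorem. Everything else is routine calculation and the application of Lemma~\ref{lem:deltaMinProb2} and Corollary~\ref{cor:totalkSwap}.
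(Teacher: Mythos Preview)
Your proposal is correct and follows essentially the same approach as the paper: both combine Corollary~\ref{cor:totalkSwap} with Lemma~\ref{lem:deltaMinProb2} via the tail formula for $\exP(T)$, using the trivial bound $T\le m^n$ to cap the harmonic contribution at $O(n\log m)$. Your version uses an integral instead of a sum and is slightly more careful about the range $\alpha\in(0,k]$ required by Lemma~\ref{lem:deltaMinProb2}, but these are cosmetic differences.
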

\begin{proof}
    Observe that, by Corollary~\ref{cor:totalkSwap}, when $T \geq t$, for a given $t$, we need to have that $\delta_{min} \leq \dfrac{m^2 \cdot n^{k+1}}{t}$. Hence, we can bound the expected value of $T$ by the number of different schedules. Therefore,
    \begin{align*}
        \exP(T) = \sum_{t=1}^{m^n} \probP(T \geq t) = \sum_{t=1}^{m^n} \probP(\delta_{min} \leq \dfrac{m^2 \cdot n^{k+1}}{t})
        &\leq \sum_{t=1}^{m^n} \dfrac{2^{k+1} \cdot m^2 \cdot n^{2k+1} \cdot \phi}{t} \\
        &= 2^{k+1} \cdot \phi \cdot m^2 \cdot n^{2k+1} \cdot \sum_{t=1}^{m^n} \dfrac{1}{t} \\
        &= O(m^2 \cdot n^{2k+2} \cdot \log m \cdot \phi),
    \end{align*}
    Where the inequality is according to Lemma~\ref{lem:deltaMinProb2}. 
\end{proof}

\bibliographystyle{plain}
\bibliography{ref}

%\newpage
%\input{content/Appendix}

\end{document}